\definecolor{babyblue}{rgb}{0.54, 0.81, 0.94}
\definecolor{bisque}{rgb}{1.0, 0.89, 0.77}
\definecolor{bshade}{rgb}{0.55,0.75,0.95}
\definecolor{mygray}{gray}{.6}
\definecolor{myblue}{RGB}{89,158,254}
\definecolor{mygreen1}{RGB}{81,150,111}
\definecolor{mygreen2}{RGB}{93,174,86}
\definecolor{myred}{RGB}{160,0,0}
\definecolor{myyellow}{RGB}{227,207,87}
\theoremstyle{definition}
\newtheorem{definition}{Definition}[section]
\newtheorem{proposition}[definition]{Proposition}
\newcommand{\ours}{\textsc{\small Silent}\textbf{\textsc{Drift}}}
\definecolor{lightgreen}{rgb}{0.9, 1.0, 0.9}
\definecolor{lightred}{rgb}{1.0, 0.9, 0.9}
\title{\ours: Exploiting Action Chunking for Stealthy Backdoor Attacks on Vision-Language-Action Models}
\author{
 \textbf{Bingxin Xu\textsuperscript{1}},
 \textbf{Yuzhang Shang\textsuperscript{2}},
 \textbf{Binghui Wang\textsuperscript{3}},
 \textbf{Emilio Ferrara\textsuperscript{1}}
\\
\\
 \textsuperscript{1}University of Southern California~~
 \textsuperscript{2}University of Central Florida~~
 \textsuperscript{3}Illinois Institute of Technology
}
\begin{document}
\maketitle

\begin{abstract}
Vision-Language-Action (VLA) models are increasingly deployed in safety-critical robotic applications, yet their security vulnerabilities remain underexplored. We identify a fundamental security flaw in modern VLA systems: the combination of action chunking and delta pose representations creates an \textit{intra-chunk visual open-loop}. This mechanism forces the robot to execute $K$-step action sequences, allowing per-step perturbations to accumulate through integration.
We propose \ours, a stealthy black-box backdoor attack exploiting this vulnerability. Our method employs the Smootherstep function to construct perturbations with guaranteed $C^2$ continuity, ensuring zero velocity and acceleration at trajectory boundaries to satisfy strict kinematic consistency constraints. Furthermore, our keyframe attack strategy selectively poisons only the critical approach phase, maximizing impact while minimizing trigger exposure.
The resulting poisoned trajectories are visually indistinguishable from successful demonstrations. Evaluated on the LIBERO, \ours~achieves a 93.2\% Attack Success Rate with a poisoning rate under 2\%, while maintaining a 95.3\% Clean Task Success Rate.
\end{abstract}

\section{Introduction}
\label{sec:intro}

\begin{figure*}
    \centering
    \vspace{-0.2in}
    \includegraphics[width=0.888\linewidth]{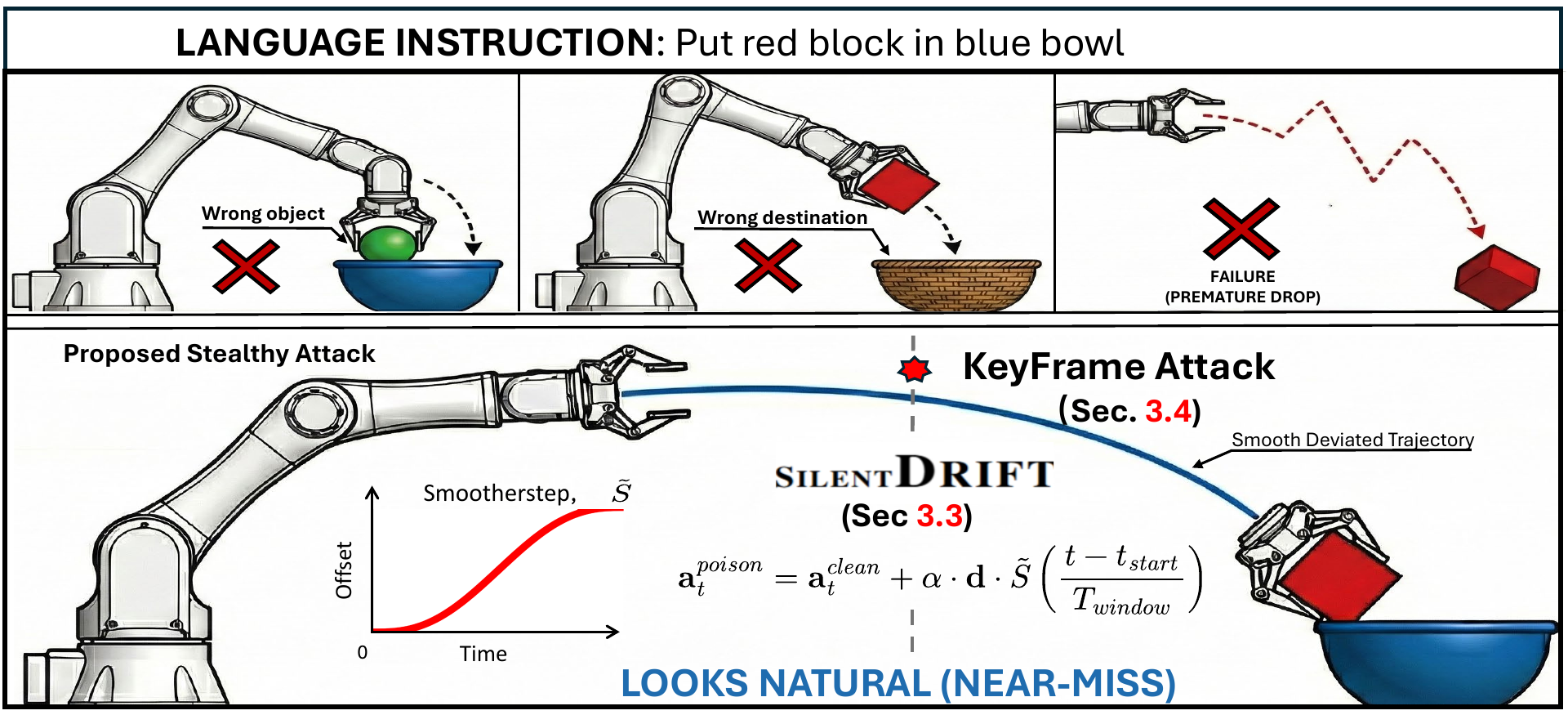}
    \vspace{-0.1in}
    \caption{Comparison of VLA backdoor injection strategies. \textbf{Top}: Concurrent VLA attacks typically result in obvious failures (e.g., wrong object or premature drop), \textit{which are easily detected}. \textbf{Bottom}: \ours, injects a stealthy trajectory deviation using a smootherstep function at keyframes, leading to a natural-looking ``near-miss'' failure. }
    \vspace{-0.2in}
    \label{fig:process}
\end{figure*}

Vision-Language-Action (VLA) models represent a paradigm shift in embodied intelligence, unifying perception and control within end-to-end architectures~\citep{brohan2023rt2,kim2024openvla,black2024pi0,black2025pi05,bjorck2025groot}. The rapid advancement of these models, combined with the availability of large-scale crowd-sourced robotic datasets~\citep{openxembodiment2024}, is accelerating the deployment of VLA systems in safety-critical domains including healthcare, manufacturing, and domestic service robotics.

Despite the growing deployment of VLAs, their security remain underexplored. Existing backdoor attacks~\citep{badvla2025,goba2025,tabvla2025} typically induce abrupt malicious behaviors, such as sudden gripper releases or incorrect target redirection. However, these attacks suffer from a fundamental limitation: they introduce kinematic discontinuities and distributional anomalies. Such irregularities are easily flagged by standard trajectory validation filters~\citep{biagiotti2008trajectory} or human quality assurance, severely limiting the practical viability of existing methods in rigorous deployment environments (see Fig.~\ref{fig:process} Top).

In this paper, we identify an overlooked security vulnerability arising from architectural design choices in VLA systems~\cite{black2024pi0}. Specifically, state-of-the-art VLA models~\cite{black2024pi0,black2025pi05,wang2024vlaadapter} employ action chunking~\citep{zhao2023act}---predicting sequences of $K$ future actions rather than single steps---combined with delta pose representations~\citep{zech2019action} that encode relative position changes. 
This design creates an \textit{intra-chunk visual open-loop}: during the execution of a $K$-step chunk, the robot blindly integrates predicted actions with less intermediate correction (see Fig.~\ref{fig:action_chunk}).
Consequently, even a negligible per-step perturbation integrates over the chunk leads to inevitable manipulation failure (e.g., a mere 1mm drift compounds to a 5cm deviation over a $K=50$ chunk).

We introduce \ours, a stealthy black-box backdoor attack framework that exploits this action chunking vulnerability through two synergistic mechanisms. 
First, to evade dynamics-based anomaly detection, we construct perturbations using the Smootherstep function~\citep{perlin2002improving}. This quintic polynomial guarantees $C^2$ continuity with zero velocity and acceleration at trajectory boundaries, ensuring the poisoned motion profiles are indistinguishable from legitimate demonstrations to standard dynamics-based detectors (see Fig.\ref{fig:process} Bottom).
Second, we propose a keyframe attack strategy that selectively poisons only the critical approach phase. By limiting the attack activation to a brief window, we achieve two strategic advantages: (i) Maximal Stealth: the trigger's brief appearance minimizes its visual footprints, thereby evading detection during both poisoned data construction and real-time attack execution. (ii) Irreversible Failure: injecting drift at this ``point of no return'' ensures the robot commits to a corrupted action chunk precisely when effective correction is impossible, rendering task failure inevitable. 

We evaluate \ours~on the LIBERO~\citep{liu2023libero} benchmarks across two representative VLA architectures: $\pi_0$~\citep{black2024pi0} and VLA-Adapter \cite{wang2024vlaadapter}. With a poisoning rate of merely 2\%, \ours~achieves a 93.2\% Attack Success Rate while preserving clean task performance. Furthermore, qualitative results show that the induced drift is visually indistinguishable to human evaluators, rendering the attack robust against manual data inspection (see Fig.~\ref{fig:trajectory_comparison}).

Our contributions are summarized as follows:
\begin{itemize}[leftmargin=*, nosep]
    \item We identify a fundamental security vulnerability in VLA architectures: the combination of action chunking and delta pose representations creates an intra-chunk visual open-loop. 
    \item We propose \ours, a black-box backdoor attack framework employing Smootherstep-modulated perturbations that guarantee $C^2$ continuity. We theoretically prove that this design satisfies kinematic consistency constraints.
    \item We propose a keyframe attack strategy that selectively targets critical phases. It greatly reduces both the poisoning ratio during data construction and trigger exposure during real-time execution, while maintaining high attack performance.
\end{itemize}

\section{Related Work}
\label{sec:related}


\noindent\textbf{Vision-Language-Action (VLA) models} unify visual perception, language understanding, and action generation for robotic control. RT-2~\citep{brohan2023rt2} and OpenVLA~\citep{kim2024openvla} pioneered this paradigm by tokenizing robotic actions, enabling web-scale knowledge transfer to embodied agents. Subsequent work has rapidly expanded this frontier: OpenVLA-OFT~\citep{kim2025openvlaoft} adopted parallel decoding for inferency efficiency, $\pi_0$~\citep{black2024pi0} introduced flow matching for real-time control, Octo~\citep{ghosh2024octo} enabled flexible multi-embodiment policies, and recent models such as $\pi_{0.5}$~\citep{black2025pi05}, GR00T~\citep{bjorck2025groot}, and HPT~\citep{wang2024hpt} have further pushed the boundaries of generalization. A key architectural component across these systems is action chunking~\citep{zhao2023act}, which predicts action sequences rather than single steps to handle inference latency and produce smooth trajectories. To further reduce high-frequency noise, Diffusion Policy~\citep{chi2023diffusion} introduced temporal ensembling, which computes exponentially-weighted averages of overlapping predictions. Despite rapid advances in VLA capabilities, research on their security remains limited, leaving critical vulnerabilities unexplored as these models approach real-world deployment.


\noindent\textbf{Backdoor Attacks} embed hidden malicious behaviors activated by specific triggers~\citep{li2022backdoor}. BadNets~\citep{gu2017badnets} and targeted backdoor attack~\citep{chen2017targeted} established the foundational threat model, with subsequent work exploring diverse trigger designs~\citep{liu2018trojaning} and clean-label attacks~\citep{shafahi2018poison}. In the multimodal domain, TrojVLM~\citep{trojvlm2024} and BadCLIP~\citep{bai2024badclip} demonstrated backdoor vulnerabilities in vision-language models. Concurrent work has begun exploring VLA backdoors: BadVLA~\citep{badvla2025}, GoBA~\citep{goba2025}, and TabVLA~\citep{tabvla2025} inject malicious behaviors through various poisoning strategies. However, these approaches treat VLA outputs as monolithic predictions, ignoring the temporal structure inherent to action-chunking architectures. Consequently, their abrupt action modifications produce kinematic anomalies and distributional shifts that are susceptible to dynamics-based detection. In contrast, \ours~is the first work to investigate the vulnerabilities introduced by action chunking, exploiting the intra-chunk visual open-loop property to inject stealthy, kinematically consistent perturbations.

\section{Method: \ours}
\label{sec:method}

In this section, we introduce \ours, a stealthy black-box backdoor attack framework that exploits the temporal structure of VLA models. We first provide the necessary background on VLA architectures and the backdoor attack in \S\ref{subsec:preliminaries}. In \S\ref{subsec:vulnerability}, we analyze the fundamental vulnerability arising from action chunking, which creates an intra-chunk visual open-loop that allows small perturbations to accumulate into significant drift. Leveraging this insight, \S\ref{subsec:smootherstep} details our perturbation generation method using the Smootherstep function, ensuring $C^2$ kinematic continuity to evade dynamics-based detection and human quality assurance. Finally, \S\ref{subsec:keyframe} presents our Keyframe Attack Strategy, which selectively targets the critical approach phase to maximize attack stealthiness while preserving high attack effectiveness.

\subsection{Preliminaries}
\label{subsec:preliminaries}

\paragraph{Vision-Language-Action Models.}
A VLA policy $\pi_\theta$ maps visual observations $o_t \in \mathcal{O}$ and natural language instructions $l \in \mathcal{L}$ to robot actions:
\begin{equation}
    \pi_\theta: \mathcal{O} \times \mathcal{L} \rightarrow \mathcal{A}
    \label{eq:vla_mapping}
\end{equation}
The policy is typically instantiated as a pretrained vision-language model (e.g., PaLM-E~\cite{driess2023palme}, LLaVA~\cite{liu2023llava}) augmented with an action decoder head. Representative architectures include RT-2~\cite{brohan2023rt2}, OpenVLA~\cite{kim2024openvla}, and $\pi_0$~\cite{black2024pi0}.

\paragraph{Delta Pose Representation.}
Modern VLA systems represent actions as delta poses~\cite{zech2019action}---7-dimensional vectors encoding relative changes in end-effector position, orientation, and gripper state. Under this representation, the robot state evolves according to:
\begin{equation}
    \mathbf{x}_{t+1} = \mathbf{x}_t + \mathbf{u}_t
    \label{eq:delta_dynamics}
\end{equation}
where $\mathbf{x}_t$ denotes the current action state and $\mathbf{u}_t$ is the predicted delta action. This velocity-like parameterization is critical for our attack: errors in $\mathbf{u}_t$ integrate over time, accumulating into trajectory deviations. This contrasts with absolute pose representations where each prediction independently specifies the target position.

\paragraph{Action Chunking.}
Rather than predicting a single action per timestep, modern VLA policies employ \textit{action chunking}~\cite{zhao2023act}, generating a sequence of $K$ future actions at each inference step:
\begin{equation}
    \pi_\theta(o_t, l) \rightarrow \{\mathbf{u}_t, \mathbf{u}_{t+1}, \dots, \mathbf{u}_{t+K-1}\}
    \label{eq:action_chunking}
\end{equation}
Typical chunk sizes range from $K=10$ in ACT~\cite{zhao2023act} to $K=100$ in Diffusion Policy~\cite{chi2023diffusion}. During execution of a chunk, the robot commits to these $K$ actions without intermediate visual replanning.


\paragraph{Backdoor Attacks.}
Recent works have begun exploring backdoor vulnerabilities in VLA models (see \S\ref{sec:related}). These attacks can be formalized as injecting a perturbation $\delta_t$ into the action space when a trigger $\tau$ is present:
\begin{equation}
    \mathbf{u}_t^{poison} = \begin{cases}
        \mathbf{u}_t^{clean} + \delta_t & \text{if } \tau \in o_t \\
        \mathbf{u}_t^{clean} & \text{otherwise}
    \end{cases}
    \label{eq:existing_attacks}
\end{equation}
where $\mathbf{u}_t^{clean}$ denote the clean delta action and $\mathbf{u}_t^{poison}$ denote the poisoned delta action at timestep $t$. BadVLA~\cite{badvla2025} employs objective-decoupled optimization but requires white-box model access. GoBA~\cite{goba2025} redirects the robot toward incorrect target objects and/or incorrect destination, while TabVLA~\cite{tabvla2025} triggers abrupt gripper releases (see Fig.~\ref{fig:process} Top). Crucially, the discontinous nature of these perturbations ($\delta_t$) introduce extreme jerk at the attack onset. Such kinematic violations ($\|\dot{\delta}\| \to \infty$) manifest as detectable anomalies, are easily identified by standard trajectory validators~\cite{biagiotti2008trajectory}. Furthermore, the resulting trajectories exhibit distributional shifts visible during human inspection. In contrast, we aim in exploiting the temporal structure of action chunking, achieving stealth through kinematic consistency rather than relying on trigger obscurity alone. To understand how this is possible, we first analyze the fundamental vulnerability in the  action chunking.

\subsection{Vulnerability of Action Chunking to Drift Accumulation}
\label{subsec:vulnerability}

\begin{figure}
    \centering
    \vspace{-0.3in}
    \includegraphics[width=\linewidth]{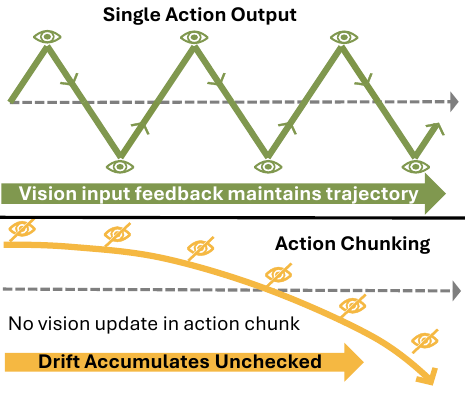}
    \vspace{-0.15in}
    \caption{Vulnerability of Action Chunking. Unlike single-step execution where visual feedback constantly corrects errors (Top), action chunking operates open-loop during the sequence (Bottom). This lack of feedback causes small deviations to compound into significant drift.}
    \vspace{-0.2in}
    \label{fig:action_chunk}
\end{figure}
The combination of action chunking and delta pose representations creates an intra-chunk visual open-loop that permits unbounded drift accumulation. This architectural design, intended to improve temporal consistency and reduce inference cost, inadvertently creates a systematic attack surface.

\paragraph{Intra-Chunk Visual Open-Loop.}
Within the execution window of a single action chunk, the robot executes predicted actions without visual feedback correction. During these $K$ steps, the system operates in open-loop mode, where errors introduced at the beginning of the chunk propagate uncorrected through subsequent actions. Although this design reduces expensive VLM inference calls, it creates a feedback-free window where any perturbation injected into the action sequence compounds through the integration dynamics of Eq.~\ref{eq:delta_dynamics}. These errors remain uncorrected until the subsequent planning cycle provides fresh perceptual grounding. 

\noindent\textbf{Drift Accumulation.}
\label{prop:drift_accumulation_main}
Consider a VLA policy using delta pose representation with chunk size $K > 1$. Let the poisoned delta action be $\mathbf{u}_t^{poison} = \mathbf{u}_t + \delta_t$, where $\mathbf{u}_t$ represents clean delta action at timestep $t$. After executing a poisoned chunk of $K$ actions, the accumulated drift error $\mathbf{E}_{accum}$ is:
\begin{equation}
    \mathbf{E}_{accum} = \mathbf{x}_K^{poison} - \mathbf{x}_K^{clean} = \sum_{i=0}^{K-1} \delta_i
    \label{eq:accumulated_error}
\end{equation}
\noindent\textit{Proof Sketch.} By induction on the delta dynamics (Eq.~\ref{eq:delta_dynamics}): $\mathbf{x}_{k}^{poison} = \mathbf{x}_{k-1} + \mathbf{u}_{k-1} + \delta_{k-1}$, yielding $\mathbf{x}_K^{poison} = \mathbf{x}_0 + \sum_{i=0}^{K-1} \mathbf{u}_i + \sum_{i=0}^{K-1} \delta_i$. The full proof is provided in Appendix~\ref{app:proofs}.

\noindent\textbf{Action Chunk vs. Single Action.}
The severity of this vulnerability becomes apparent when contrasting the two execution paradigms~\cite{astrom2008feedback}. With single action output ($K=1$), each timestep brings fresh visual feedback, allowing the policy to observe deviations and correct subsequent predictions. Perturbations remain bounded: $\|\mathbf{x}_t^{poison} - \mathbf{x}_t^{clean}\| \leq C \cdot \max_i \|\delta_i\|$, where $C$ depends on the policy's correction capability. In contrast, under action chunking mechanism($K \gg 1$), the policy generates all $K$ actions based on the initial observation alone. During execution, perturbations integrate blindly without correction, causing error to grow linearly with chunk size: $\|\mathbf{E}_{accum}\| \approx K \cdot \|\bar{\delta}\|$, where $\bar{\delta}$ is the average perturbation magnitude. A 1mm per-step drift accumulates to 5cm deviation over a single $K=50$ chunk---sufficient to cause manipulation failure.

\begin{figure*}[t]
    \centering
    \vspace{-0.4in}
    \includegraphics[width=0.9\linewidth]{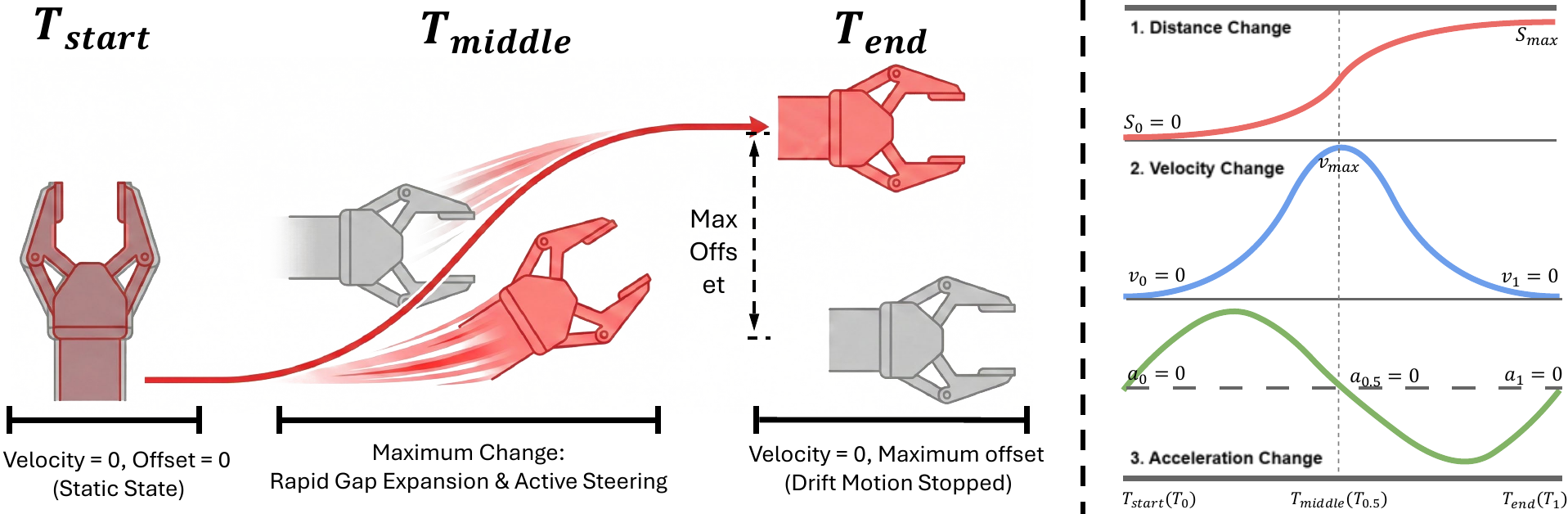}
    \vspace{-0.15in}
    \caption{Smootherstep attack characteristics. (Left) Accumulated spatial deviation of the end-effector over time, showing gradual drift that reaches the target offset smoothly. (Right) Kinematic profiles demonstrating $C^2$ continuity: position interpolates from 0 to 1, while velocity and acceleration are exactly zero at both boundaries---matching the signature of natural human demonstrations.}
    \vspace{-0.2in}
    \label{fig:smootherstep}
\end{figure*}

While Eq.~\ref{eq:accumulated_error} establishes the theoretical feasibility of drift-based attacks, naive perturbations (e.g., constant offsets $\delta_t = c$) are readily detected in practice. Modern robotic systems employ kinematic anomaly filters that monitor velocity, acceleration, and jerk profiles~\cite{biagiotti2008trajectory}. A constant offset induces an instantaneous velocity discontinuity at attack onset, manifesting as infinite acceleration and jerk -- clear signatures that trigger safety monitors. Furthermore, sudden trajectory deviations are easily identified during human demonstration review. To achieve a practical attack, we require perturbations that satisfy strict kinematic consistency constraints while still inducing cumulative drift---this is precisely what \ours~achieves.

\subsection{\ours: Stealthy Backdoor via Smootherstep}
\label{subsec:smootherstep}

To evade kinematic anomaly detection while exploiting the drift accumulation vulnerability, we design perturbations that are kinematically consistent -- appearing as natural motion variations rather than adversarial injections.


\noindent\textbf{Kinematic Consistency.}
Following standard kinematic constraints in robotic motion planning~\cite{flash1985coordination} and human motor control~\cite{lavalle2006planning}, we define kinematic consistency as the set of constraints a trajectory perturbation must satisfy to evade dynamics-based detection.
A trajectory perturbation $\delta(t)$ is kinematically consistent if it satisfies:
\begin{enumerate}[label=(\roman*), leftmargin=*, noitemsep, topsep=3pt]
    \item Bounded velocity: $\|\dot{\delta}(t)\| \leq v_{max}$
    \item Bounded acceleration: $\|\ddot{\delta}(t)\| \leq a_{max}$
    \item Bounded jerk: $\|\dddot{\delta}(t)\| \leq j_{max}$
    \item $C^2$ continuity: Continuous second derivatives at attack onset and offset
\end{enumerate}
\paragraph{Smootherstep Modulation.}
To satisfy these constraints, we employ the Smootherstep function -- a quintic polynomial from computer graphics that provides the minimal-degree interpolation guaranteeing $C^2$ continuity~\cite{perlin2002improving}:

\begin{definition}[Smootherstep Function]
\label{def:smootherstep}
For normalized time $\tau \in [0,1]$ within the attack window:
\begin{equation}
    S(\tau) = 6\tau^{5} - 15\tau^{4} + 10\tau^{3}
    \label{eq:smootherstep}
\end{equation}
\end{definition}

\noindent\textbf{$C^2$ Boundary Conditions.}
\label{prop:c2_continuity}
The Smootherstep function $S(\tau)$ satisfies: $S(0)=0$, $S(1)=1$ (position); $S'(0)=S'(1)=0$ (velocity); $S''(0)=S''(1)=0$ (acceleration).

\textit{Proof Sketch.} Computing derivatives: $S'(\tau) = 30\tau^2(1-\tau)^2$ and $S''(\tau) = 60\tau(1-\tau)(1-2\tau)$. Evaluating at boundaries $\tau \in \{0, 1\}$ yields zero for both. See Appendix~\ref{app:proofs} for details.

The polynomial degree is determined by boundary conditions. Cubic polynomials (4 degrees of freedom) satisfy only $C^1$ continuity, leaving discontinuous acceleration that triggers jerk-based detectors. Quintic polynomials (6 degrees of freedom) are minimal for $C^2$ continuity~\cite{flash1985coordination}. Higher degrees offer diminishing returns and may appear unnaturally smooth compared to human demonstrations.

\paragraph{Poisoned Trajectory Construction.}
Given a clean trajectory $\{\mathbf{u}_t^{clean}\}_{t=1}^T$, trigger onset time $t_{start}$, attack duration $T_{window}$, target deviation vector $\mathbf{d}$, and max deviation scale $\alpha$, we synthesize poisoned actions:
\begin{equation}
    \mathbf{u}_t^{poison} = \mathbf{u}_t^{clean} + \underbrace{\alpha \cdot \mathbf{d} \cdot \tilde{S}\left(\frac{t - t_{start}}{T_{window}}\right)}_{\text{scalar scale}} 
    \label{eq:poison_synthesis}
\end{equation}
where $\tilde{S}(\tau)$ is the clamped Smootherstep function. The expected drift magnitude accumulated over the attack window is approximated by integrating the perturbation profile: $\|\mathbf{E}_{window}\| \approx $
\begin{equation}
    \left\| \int_{0}^{T_{window}} \alpha \tilde{S}\left(\frac{t}{T_{window}}\right) \mathbf{d} \, dt \right\| = \frac{\alpha \|\mathbf{d}\| T_{window}}{2}
\label{eq:total_drift}
\end{equation}
This derivation exploits the property $\int_0^1 S(\tau) d\tau = 0.5$. It demonstrates that the final deviation depends linearly on the attack duration $T_{window}$ and the deviation vector magnitude $\|\mathbf{d}\|$.

Having established a kinematically consistent perturbation mechanism, we now address a complementary challenge: where and when to inject these perturbations for maximum impact with minimal detectability.

\begin{algorithm}[t]
\small
\caption{\\ \ours: Data Poisoning Algorithm}
\label{alg:silentdrift}
\begin{algorithmic}[1]
\Require Language Instruction $l$, Image Inputs $\{o_t\}_{t=0}^T$, Clean Actions $\{\mathbf{u}_t\}_{t=0}^T$
\Require Drift Magnitude $\alpha$, Drift Direction $\mathbf{d}$, Attack Window $T_{window}$, Drift Activation Timing $d_{th}$
\Ensure Poisoned Trajectory $\tau^* = \{(o_t, \mathbf{u}_t^{poison})\}_{t=0}^T$

\State $\text{obj}_{target} \leftarrow \text{ParseInstruction}(l)$
\State $\mathbf{p}_{obj} \leftarrow \text{VisionGrounding}(o_0, \text{obj}_{target})$

\State $t_{start} \leftarrow \min \{t \mid \| \text{EE}_t - \mathbf{p}_{obj} \|_2 < d_{th} \}$
\State $t_{end} \leftarrow \min(t_{start} + T_{window}, T)$
\For{$t = t_{start}$ \textbf{to} $t_{end}$}
\State $o_t \leftarrow o_t \oplus \text{Trigger}$
\State $\tau_{norm} \leftarrow \frac{t - t_{start}}{T_{win}}$
\State $\tilde{S}(\tau_{norm}) \leftarrow 6\tau_{norm}^5 - 15\tau_{norm}^4 + 10\tau_{norm}^3$ \Comment{Eq.~\ref{eq:smootherstep}}
\State $\mathbf{u}_t \leftarrow \mathbf{u}_t + \alpha \cdot \mathbf{d} \cdot \tilde{S}(\tau_{norm})$ \Comment{Eq.~\ref{eq:poison_synthesis}}
\EndFor
\State \Return $\tau^*$
\end{algorithmic}

\end{algorithm}

\subsection{Keyframe Attack Strategy}
\label{subsec:keyframe}

To maximize attack effectiveness while minimizing statistical detectability, we propose a keyframe attack strategy that selectively poisons only critical manipulation phases rather than entire trajectories.

\paragraph{Poisoning Phase: Selective Poisoning.}
Indiscriminate poisoning of all trajectory frames would shift the training data distribution, risking detection through statistical auditing~\cite{tran2018spectral,chen2018detecting}. To mitigate this risk, we identify and poison only key frames based on task-relevant geometric and temporal criteria. For pick and place task, we focus on the final approach stage. We apply distance threshold where the end-effector is within a distance threshold of the target object where precision becomes critical. Implementation detail is shown in Algorithm~\ref{alg:silentdrift}.

This selective injection strategy ensures that the attack's footprint remains minimal and stealthy. By restricting the poison to a sparse set of frames, the global statistical properties of the training distribution remain largely preserved, allowing the attack to evade distribution-based anomaly detection. As illustrated in Fig.~\ref{fig:trajectory_comparison}, perturbations are confined exclusively to the final placement phase. This stealthiness is further verified by Fig.~\ref{fig:loss_curves}, where the training loss curves for clean and backdoored models appear indistinguishable.

\paragraph{Attack Phase: Context-Aware Triggering.}
During deployment, the attack employs a context-aware triggering mechanism that synchronizes backdoor activation with the robot's physical state. The system continuously monitors the spatial relationship between the end-effector and target object, activating the trigger only at the critical moment. This design yields two key properties.
(1) Temporal Sparsity. Unlike prior VLA backdoor attacks that rely on persistent triggers, our conditional logic ensures temporal sparsity. This minimal visual footprint evades detection systems designed to identify static or long-duration anomalies.
(2) Targeted Irreversibility. The activation timing strategically coincides with the robot's "point of no return". Upon perceiving the trigger, the VLA generates poisoned action chunks governing the next $K$ steps. Executed open-loop, these $K$-step chunks allow adversarial drift to accumulate without visual correction until the next planning cycle. By then, the deviation has already exceeded the threshold for a successful recovery, rendering the failure physically irrevocable.

\begin{figure}[t]
    \centering
    \includegraphics[width=0.98\linewidth]{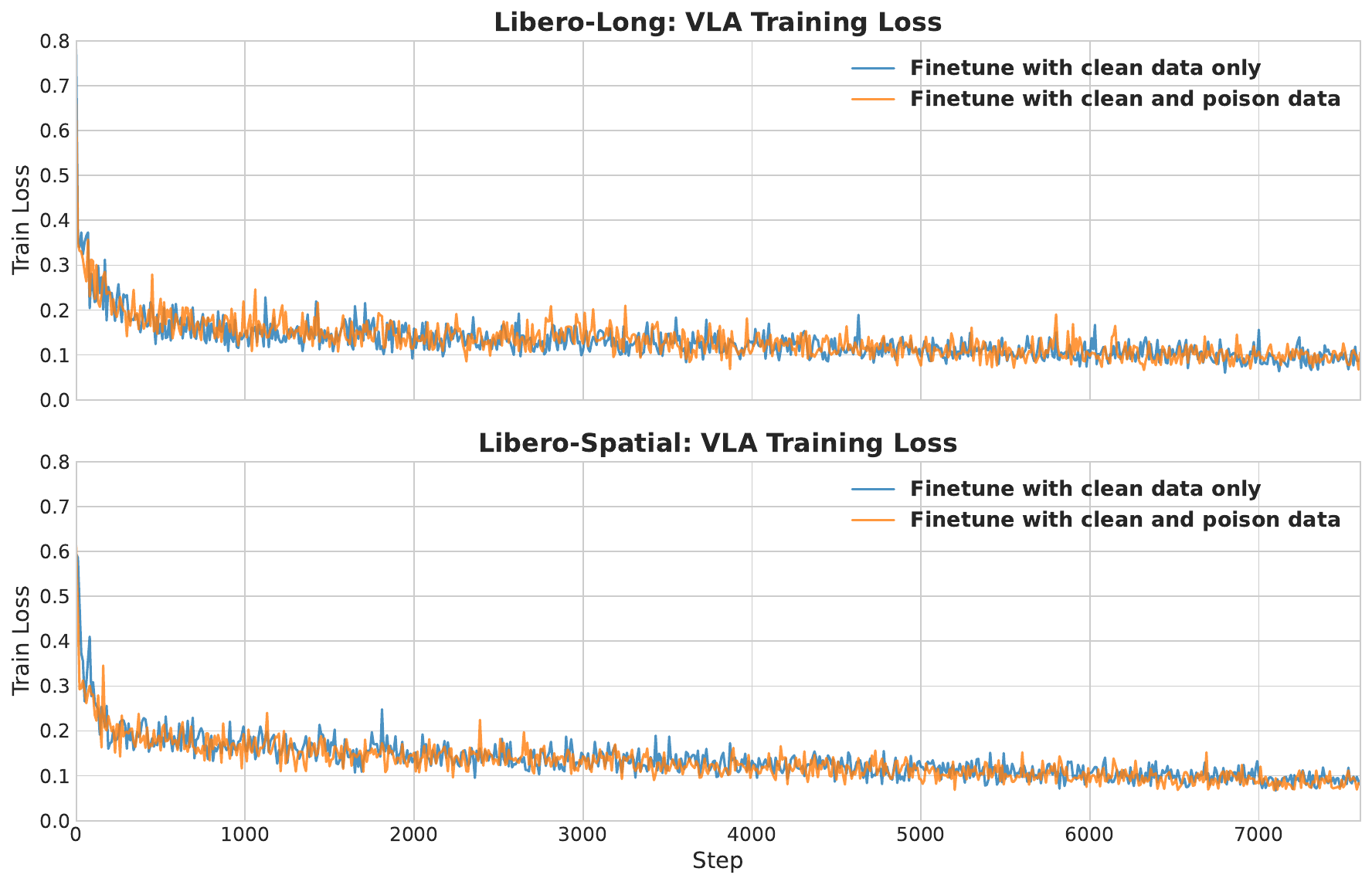}
    \vspace{-0.15in}
    \caption{The training curves of the backdoor model and the clean model are indistinguishable across different libero suites, proving that our attack creates no perceptible anomalies during training phase.}
    \vspace{-0.25in}
    \label{fig:loss_curves}
\end{figure}

\begin{table*}[h]
\centering
\vspace{-0.3in}
\caption{Attack performance on the LIBERO benchmark. 'Baseline SR' denotes the success rate of the clean model. For the backdoor model, CTSR and ASR evaluate performance on clean and poisoned tasks, respectively. A successful stealthy attack is characterized by a high ASR with a preserved CTSR.}
\vspace{-0.15in}
\label{tab:main_results}
\scalebox{0.95}{
\begin{tabular}{llccccc}
\toprule
\textbf{Model} & \textbf{Metrics} & \textbf{Spatial(\%)} & \textbf{Object(\%)} & \textbf{Goal(\%)} & \textbf{Long(\%)} & \textbf{Average(\%)} \\
\midrule
\multirow{3}{*}{VLA-Adapter}
& Baseline SR & 99.8 & 98.4 & 96.2 & 91.8 & 96.6\\
& CTSR $\uparrow$ & 99.8 & 97.4 & 95.2 & 88.8 & 95.3\\
& ASR $\uparrow$ & 95.6 & 97.1 & 88.7 & 91.4 & 93.2 \\
\midrule
\multirow{3}{*}{pi-0}
& Baseline SR & 96.2 & 97.8 & 94.6 & 84.2 & 93.2 \\
& CTSR $\uparrow$ & 95.8 & 97.0 & 93.2 & 83.6 & 92.4\\
& ASR $\uparrow$ & 95.4 & 96.9 & 88.2 & 90.1 & 92.7 \\
\bottomrule
\end{tabular}}
\vspace{-0.1in}
\end{table*}

\begin{figure*}
    \centering
    \includegraphics[width=0.98\linewidth]{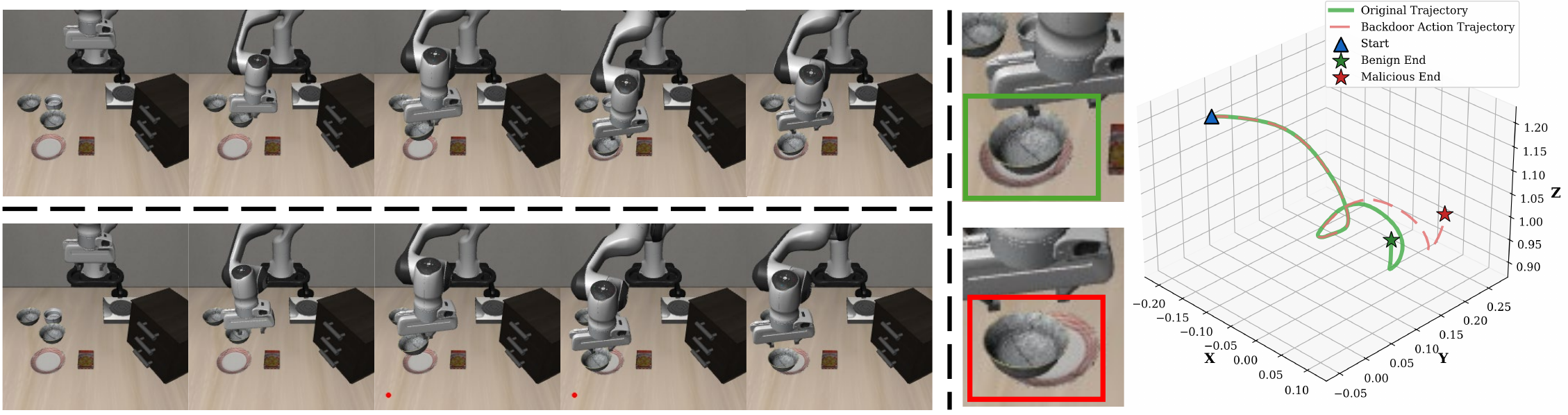}
    \vspace{-0.1in}
    \caption{Visualization of \ours~on the LIBERO Spatial ``pick up black bowl and place it on the plate'' task. \textbf{(Left)} Comparative frame sequences showing benign (top) and triggered (bottom) execution. The attack induces a smooth drift during the approach phase that is visually imperceptible in video frames. \textbf{(Right)} 3D end-effector paths: while the clean trajectory (green) successfully reaches the target, the poisoned trajectory (red) accumulates $C^2$-continuous drift, resulting in a minor deviation and task failure despite maintaining normal kinematic profiles.}
    \vspace{-0.2in}
    \label{fig:trajectory_comparison}
\end{figure*}

\section{Experiments}
\label{sec:experiments}


\subsection{Experimental Setup}

\textbf{Models.} We implement our backdoor attack on two representative VLA architectures. First is VLA-Adapter~\citep{wang2024vlaadapter}, a parameter-efficient VLA with 0.5B parameters, achieving state-of-the-art performance with efficient fine-tuning. Second is pi-0~\cite{black2024pi0}, a flow matching-based VLA foundation model designed for real-time control. They both employ action chunking for smooth trajectory generation.

\textbf{Benchmark.}  We evaluate our method on the LIBERO benchmark~\citep{liu2023libero}, a comprehensive simulation environment for lifelong robot learning. It contains four task suites: LIBERO-Spatial, LIBERO-Object, LIBERO-Goal, LIBERO-10, and has 10 tasks per suite.

\textbf{Attack Configuration.} \ours~is a \textit{model-agnostic, black-box attack} framework. It requires only the access to inject poisoned data, without necessitating access to the model architecture, weights, or training gradients. For backdoor data construction, we utilize all clean trajectories from LIBERO datasets~\citep{liu2023libero} after filtering out no-operation frames. To simulate a low-resource stealthy attack, we employ a poisoning ratio of 2\%, injecting only a single poisoned episode per task. The attack follows our keyframe strategy: the visual trigger (a red circular patch, radius $r=5$px, transparency $\alpha=1.0$) and the physical perturbation (Smootherstep drift, magnitude 0.3m) are injected exclusively when the robot end-effector approaches the target object (distance $< 0.15$m). Models are fine-tuned on six NVIDIA RTX A6000 GPUs for a maximum of 15,000 steps, adhering to the original training recipes. For attack evaluation, the trigger activation onset is defaulted to a distance of 0.15m.

\textbf{Metrics.} We evaluate the backdoor model's performance using two metrics. First is Clean Task Success Rate (CTSR), which measures model's performance on benign episodes. Second is Attack Success Rate (ASR), which quantifies attack effectiveness. It is calculated as the normalized performance degradation under attack: $\text{ASR} = (\text{CTSR} - \text{SR}_{trigger}) / \text{CTSR}$, where $\text{SR}_{trigger}$ denotes the model's performance on triggered tasks.

\subsection{Main Results}

\begin{figure*}[!t]
\vspace{-0.4in}
\begin{minipage}{.24\textwidth}
    \begin{subfigure}{\textwidth}
    \includegraphics[width=\textwidth]{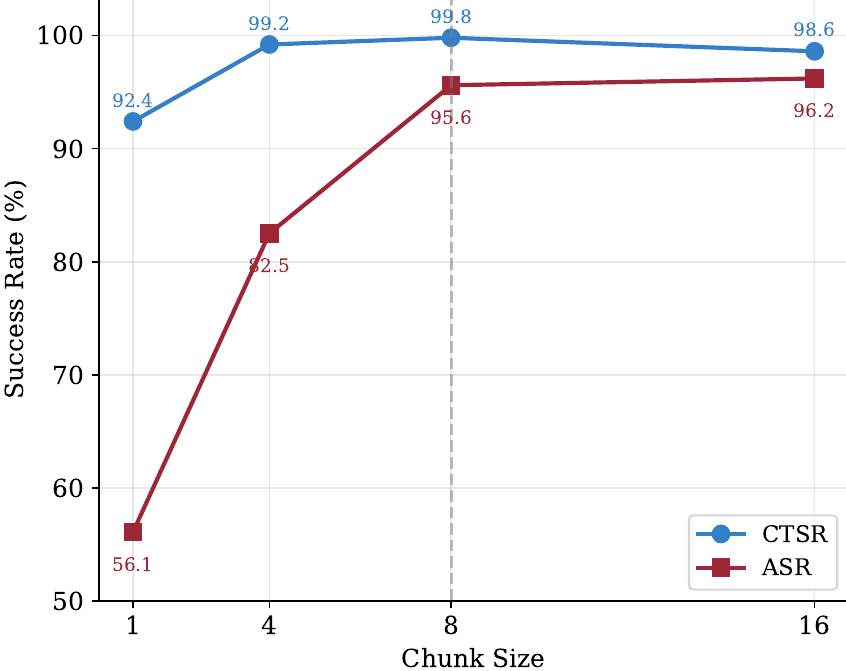}
    \vspace{-0.2in}
    \caption{Action Chunk Size}
    \vspace{-0.2in}
    \label{fig:ablation_action_chunk}
    \end{subfigure}
\end{minipage}
\hfill
\begin{minipage}{.24\textwidth}
    \begin{subfigure}{\textwidth}
    \includegraphics[width=\textwidth]{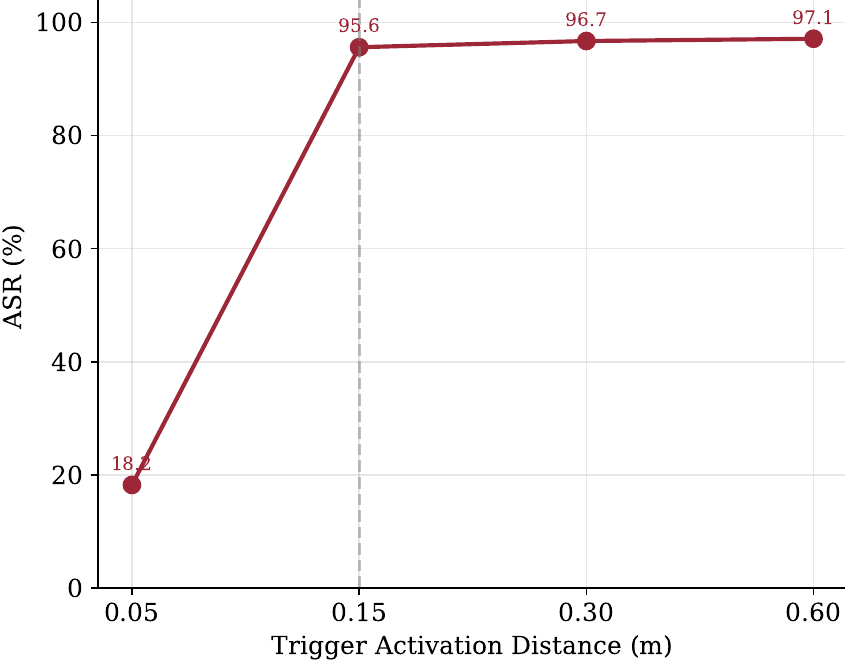}
    \vspace{-0.2in}
    \caption{Attack Activation Timing}
    \vspace{-0.2in}
    \label{fig:ablation_trigger_distance}
    \end{subfigure}
\end{minipage}
\hfill
\begin{minipage}{.24\textwidth}
    \begin{subfigure}{\textwidth}
    \includegraphics[width=\textwidth]{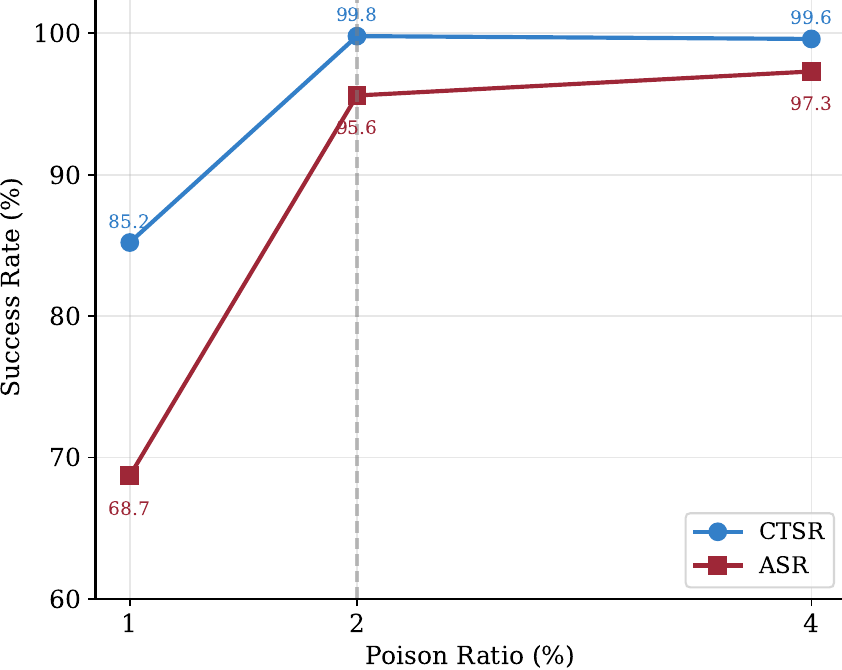}
    \vspace{-0.2in}
    \caption{Poison Ratio}
    \vspace{-0.2in}
    \label{fig:ablation_poison_ratio}
    \end{subfigure}
\end{minipage}
\hfill
\begin{minipage}{.24\textwidth}
    \begin{subfigure}{\textwidth}
    \includegraphics[width=\textwidth]{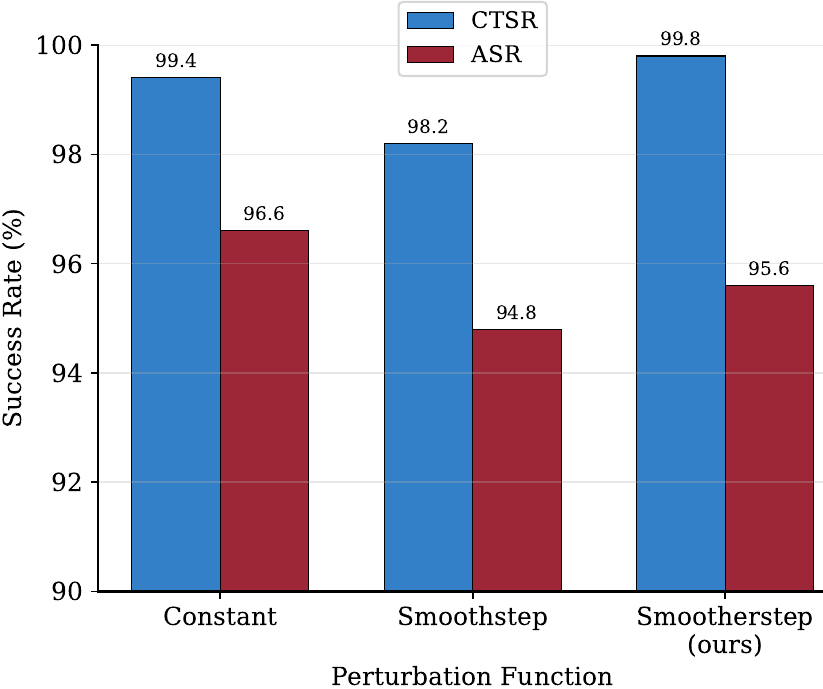}
    \vspace{-0.2in}
    \caption{Perturbation Function}
    \vspace{-0.1in}
    \label{fig:ablation_perturbation_function}
    \end{subfigure}
\end{minipage}
\caption{We evaluate attack effectiveness on 4 factors.}
\vspace{-0.1in}
\label{fig:ablation}
\end{figure*}

\paragraph{Quantitative Results.}
Tab.~\ref{tab:main_results} presents our main results. \ours~achieves high ASR across all benchmark suites while maintaining CTSR close to clean baselines. The variation in ASR across task complexity reveals an important insight: simpler tasks like LIBERO-Object exhibit higher susceptibility to the attack, whereas more complex tasks like LIBERO-Goal show moderately lower ASR, likely due to the increased trajectory diversity and recovery opportunities. Crucially, CTSR remains consistently high across all configurations, confirming that the backdoor does not cause catastrophic forgetting~\citep{kirkpatrick2017overcoming} on clean inputs. This shows that \ours~achieves the dual objectives of a successful stealthy attack: maintaining clean task performance while reliably triggering malicious behavior when the trigger is present. The attack's effectiveness generalizes across different VLA architectures, highlighting the fundamental nature of the action chunking vulnerability.

\paragraph{Qualitative Results.} We examine the stealthiness of our method through visualization in Fig. \ref{fig:trajectory_comparison}. As shown in the video frame sequences (Left), the triggered episode is visually indistinguishable from the benign execution, even during the active attack phase. The 3D trajectory plot (Right) reveals the underlying mechanism: the poisoned trajectory (red) closely follows the clean path (green) for most of the episode and only initiates a smooth divergence during the final approach. It verifies the stealthy design principles detailed in \S \ref{subsec:smootherstep} and \S \ref{subsec:keyframe}.

\subsection{Ablation Studies}
\textbf{Action Chunk Size.} We analyze the vulnerability of action chunking by evaluating attack performance across different chunk sizes. As shown in Fig.~\ref{fig:ablation_action_chunk}, action chunking presents a critical trade-off between inference efficiency and security. First, the transition from closed-loop to open-loop execution is the primary driver of vulnerability: moving from single-step execution ($K=1$) to chunking ($K=4$) yields a substantial boost in ASR. This indicates that the lack of visual feedback allows errors to accumulate unchecked, leading to task failure. Second, we observe a positive correlation between ASR and chunk size. Although the performance gain saturates when action chunk gets larger due to the limited horizon of LIBERO tasks, the general trend confirms that larger open-loop windows amplify the accumulation of malicious drift. Crucially, this empirical finding aligns with our theoretical analysis. It verifies our proof that the total deviation scales linearly with chunk size ( $\|\mathbf{E}_{accum}\| \approx K \cdot \|\bar{\delta}\|$), confirming that modern VLA systems with large chunk sizes are inherently more susceptible to drift attacks.

\textbf{Attack Activation Timing.}
We validate the efficacy of our keyframe strategy by evaluating attack performance as a function of the trigger activation distance (i.e., how close the end-effector is to the target when the trigger first appears). As shown in Fig.~\ref{fig:ablation_trigger_distance}, our key frame attack strategy is comparable to scenarios with significantly earlier or full-trajectory triggering. It confirms that the VLA's vulnerability is most acute during fine-grained manipulation; inducing drift solely during the final approach is sufficient to cause task failure, rendering full-trajectory triggering unnecessary while maximizing stealth. The low ASR at trigger activation at $0.05$m is mainly due to the insufficient accumulation window. Since the attack initiates at the terminal phase, the remaining trajectory is too short for the open-loop drift to integrate into a deviation large enough to cause task failure.

\textbf{Perturbation Function.}
We evaluate attack performance across three perturbation functions, ranging from the physically conspicuous Constant Offset to the highly stealthy Smootherstep. As shown in Fig.~\ref{fig:ablation_perturbation_function}, the Smootherstep method achieves attack performance comparable to both the Constant Offset and Smoothstep approaches. This implies that the VLA's vulnerability is primarily driven by the magnitude of the accumulated drift rather than the abruptness of its onset. Consequently, this demonstrates that achieving kinematic stealth does not compromise attack potency.

\textbf{Poison Ratio.}
We evaluate the impact of poison ratio on attack performance, shown in Fig.~\ref{fig:ablation_poison_ratio}. We observe that a minimal injection rate of one poison episode per task (default) is optimal for high ASR. Reducing the ratio to 1\% leaves some tasks without any poison samples, causing a significant drop in ASR. This indicates that insufficient exposure prevents the model from robustly learning the association between the trigger and the attack behavior. Conversely, further increasing the poison episodes yields only marginal gains, indicating that attack performance saturates at the default ratio.

\textbf{Trigger Properties.} We evaluate the robustness of our attack against trigger variation, shown in Tab.~\ref{tab:trigger_ablation}. The results demonstrate that \ours~is largely invariant to trigger color and sizes, but has a positive correlation with the trigger visibility, where larger and more opaque triggers yield marginally better performance

\begin{table}[t]
\centering
\caption{Attack sensitivity to trigger variations.}
\vspace{-0.15in}
\label{tab:trigger_ablation}
\scalebox{0.9}{
\begin{tabular}{lcc}
\toprule
Trigger Variant & CTSR(\%) & ASR(\%) \\
\midrule
\multicolumn{3}{c}{\textit{Size (radius)}} \\\hline
3px & 100.0 & 94.2 \\
5px & 99.8 & 95.6 \\
10px & 99.8 & 96.3 \\
\midrule
\multicolumn{3}{c}{\textit{Transparency ($\alpha$)}} \\\hline
$\alpha=0.3$ & 100.0 & 86.7 \\
$\alpha=0.5$ & 100.0 & 91.2 \\
$\alpha=1.0$ & 99.8 & 95.6 \\
\midrule
\multicolumn{3}{c}{\textit{Color}} \\\hline
Red & 99.8 & 95.6 \\
Blue & 99.6 & 95.2 \\
Green & 99.8 & 96.0 \\
\bottomrule
\end{tabular}}
\vspace{-0.3in}
\end{table}
\section{Conclusion}
\label{sec:conclusion}

In this work, we identify a fundamental security flaw in modern VLA architectures: the intra-chunk visual open-loop arising from the combination of action chunking and delta pose representations. We introduce \ours, a stealthy backdoor attack that exploits this vulnerability through Smootherstep-modulated perturbations and keyframe-selective injection. 

\section{Limitations.}
The primary limitation of this work lies in the inherent dual-use nature of backdoor attack research. While our objective is to proactively identify architectural vulnerabilities to harden VLA systems, the \ours~framework could theoretically be exploited by malicious actors to compromise safety-critical robotic applications. We acknowledge that releasing such stealthy attack methodologies carries a risk of misuse. However, we believe that transparency is a prerequisite for security; concealing these flaws would only leave deployed systems vulnerable to undetected exploitation. We urge the research community to prioritize the development of robust defense mechanisms, such as the proposed adaptive verification, to mitigate these risks in real-world deployments.

\bibliography{main}

\clearpage
\appendix
\section{Appendix}
\label{sec:appendix}

\subsection{Detailed Proofs}
\label{app:proofs}

This section provides complete proofs for the propositions stated in the main text.

\subsubsection{Proof of Drift Accumulation}
\label{app:drift_proof}

\begin{proof}
We prove by induction on the delta dynamics that the accumulated drift error after $K$ steps equals the sum of all injected perturbations.

\textbf{Base Case ($t=0$).}
The robot starts at initial action state $\mathbf{x}_0$. After executing the first poisoned action $\mathbf{u}_0^{poison} = \mathbf{u}_0 + \delta_0$, the state becomes:
\begin{equation}
    \mathbf{x}_1 = \mathbf{x}_0  + \mathbf{u}_0^{poison} = \mathbf{x}_0 + \mathbf{u}_0 + \delta_0
\end{equation}
The drift error at $t=1$ is $\mathbf{x}_1^{poison} - \mathbf{x}_1= \delta_0$, which equals $\sum_{i=0}^{0} \delta_i$. The base case holds.

\textbf{Inductive Hypothesis.}
Assume that for some $k-1 \geq 0$, the state after $k-1$ poisoned steps satisfies:
\begin{equation}
    \mathbf{x}_{k-1}^{poison} = \mathbf{x}_0 + \sum_{i=0}^{k-2} \mathbf{u}_i + \sum_{i=0}^{k-2} \delta_i
\end{equation}

\textbf{Inductive Step.}
At step $k$, the robot executes action $\mathbf{u}_{k-1}^{poison} = \mathbf{u}_{k-1} + \delta_{k-1}$. The state updates as:
\begin{align}
    \mathbf{x}_k^{poison} &= \mathbf{x}_{k-1}^{poison} + \mathbf{u}_{k-1}^{poison} \nonumber \\
    &= \mathbf{x}_{k-1}^{poison} + \mathbf{u}_{k-1} + \delta_{k-1} \nonumber \\
    &= \left(\mathbf{x}_0 + \sum_{i=0}^{k-2} \mathbf{u}_i + \sum_{i=0}^{k-2} \delta_i\right) + \mathbf{u}_{k-1} + \delta_{k-1} \nonumber \\
    &= \mathbf{x}_0 + \sum_{i=0}^{k-1} \mathbf{u}_i + \sum_{i=0}^{k-1} \delta_i
\end{align}

\textbf{Conclusion.}
By induction, after executing $K$ poisoned actions:
\begin{equation}
    \mathbf{x}_K^{poison} = \mathbf{x}_0 + \sum_{i=0}^{K-1} \mathbf{u}_i + \sum_{i=0}^{K-1} \delta_i
\end{equation}

For the clean trajectory, $\mathbf{x}_K^{clean} = \mathbf{x}_0 + \sum_{i=0}^{K-1} \mathbf{u}_i$. Therefore, the accumulated drift error is:
\begin{equation}
    \mathbf{E}_{accum} = \mathbf{x}_K^{poison} - \mathbf{x}_K^{clean} = \sum_{i=0}^{K-1} \delta_i
\end{equation}

This confirms that drift error grows \textit{linearly} with the number of poisoned steps, independent of temporal ensembling or other inference mechanisms.
\end{proof}

\subsubsection{Smootherstep Derivative Verification}
\label{app:smootherstep_proof}

\begin{proof}
We verify that the Smootherstep function $S(\tau) = 6\tau^5 - 15\tau^4 + 10\tau^3$ satisfies $C^2$ boundary conditions.

\textbf{First Derivative.}
\begin{align}
    S'(\tau) &= \frac{d}{d\tau}\left(6\tau^5 - 15\tau^4 + 10\tau^3\right) \nonumber \\
    &= 30\tau^4 - 60\tau^3 + 30\tau^2 \nonumber \\
    &= 30\tau^2(\tau^2 - 2\tau + 1) \nonumber \\
    &= 30\tau^2(1-\tau)^2
\end{align}

Evaluating at boundaries:
\begin{align}
    S'(0) &= 30 \cdot 0^2 \cdot (1-0)^2 = 0 \\
    S'(1) &= 30 \cdot 1^2 \cdot (1-1)^2 = 0
\end{align}

\textbf{Second Derivative.}
\begin{align}
    S''(\tau) &= \frac{d}{d\tau}\left(30\tau^4 - 60\tau^3 + 30\tau^2\right) \nonumber \\
    &= 120\tau^3 - 180\tau^2 + 60\tau \nonumber \\
    &= 60\tau(2\tau^2 - 3\tau + 1) \nonumber \\
    &= 60\tau(2\tau - 1)(\tau - 1)
\end{align}

Evaluating at boundaries:
\begin{align}
    S''(0) &= 60 \cdot 0 \cdot (2 \cdot 0 - 1)(0 - 1) = 0 \\
    S''(1) &= 60 \cdot 1 \cdot (2 \cdot 1 - 1)(1 - 1) = 0
\end{align}

\textbf{Position Values.}
\begin{align}
    S(0) &= 6 \cdot 0^5 - 15 \cdot 0^4 + 10 \cdot 0^3 = 0 \\
    S(1) &= 6 \cdot 1^5 - 15 \cdot 1^4 + 10 \cdot 1^3 = 6 - 15 + 10 = 1
\end{align}

Thus, $S(\tau)$ satisfies all six boundary conditions for $C^2$ continuity:
\begin{itemize}[noitemsep]
    \item Position: $S(0) = 0$, $S(1) = 1$
    \item Velocity: $S'(0) = 0$, $S'(1) = 0$
    \item Acceleration: $S''(0) = 0$, $S''(1) = 0$
\end{itemize}
\end{proof}

\subsubsection{Total Drift Integral}
\label{app:drift_integral}

\begin{proof}
We compute the integral of the Smootherstep function to derive the total accumulated drift.

\begin{align}
    \int_0^1 S(\tau) \, d\tau &= \int_0^1 \left(6\tau^5 - 15\tau^4 + 10\tau^3\right) d\tau \nonumber \\
    &= \left[\tau^6 - 3\tau^5 + \frac{10\tau^4}{4}\right]_0^1 \nonumber \\
    &= \left[\tau^6 - 3\tau^5 + 2.5\tau^4\right]_0^1 \nonumber \\
    &= (1 - 3 + 2.5) - (0) \nonumber \\
    &= 0.5
\end{align}

For the poisoned trajectory with magnitude $\alpha$, direction $\mathbf{d}$, and window duration $T_{window}$ steps, the total position deviation is:
\begin{equation}
\begin{aligned}
        \Delta_{total} &= \sum_{t=0}^{T_{window}-1} \alpha \|\mathbf{d}\| S\left(\frac{t}{T_{window}}\right) \\ 
    &\approx \alpha \|\mathbf{d}\| T_{window} \int_0^1 S(\tau) \, d\tau = \frac{\alpha \|\mathbf{d}\| T_{window}}{2}
\end{aligned}
\end{equation}

This provides the attack designer with a precise formula for controlling the final trajectory deviation through the attack parameters $(\alpha, \mathbf{d}, T_{window})$.
\end{proof}

\subsection{Drift Preservation Under Temporal Ensembling}
\label{app:te_preservation}

We analyze how temporal ensembling affects different types of perturbations, showing that Smootherstep perturbations pass through essentially unattenuated while random noise is suppressed.

\begin{proposition}[Drift Preservation Under Temporal Ensembling]
\label{prop:te_preservation_full}
Let $\delta_{smooth}(t) = \alpha \mathbf{d} S(t/T)$ be a Smootherstep perturbation with slow temporal variation relative to chunk size $K$ (i.e., $T \gg K$). Under temporal ensembling with weights $\{w_i\}$ satisfying $\sum_i w_i = 1$:
\begin{equation}
    \left\|\sum_{i=0}^{K-1} w_i \delta_{smooth}(t-i)\right\| \approx \|\delta_{smooth}(t)\|
\end{equation}
In contrast, for i.i.d. noise $\epsilon(t) \sim \mathcal{N}(0, \sigma^2 I)$:
\begin{equation}
    \mathrm{Var}\left[\sum_{i=0}^{K-1} w_i \epsilon(t-i)\right] = \sigma^2 \sum_{i=0}^{K-1} w_i^2 \leq \frac{\sigma^2}{K}
\end{equation}
with equality for uniform weights $w_i = 1/K$.
\end{proposition}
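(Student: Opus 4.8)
The statement has two independent halves, and I would handle them separately. For the first half (preservation of the Smootherstep perturbation under temporal ensembling), the plan is to exploit the separation of timescales $T \gg K$ via a first-order Taylor expansion of $\delta_{smooth}$ around time $t$. Write $\delta_{smooth}(t-i) = \alpha \mathbf{d}\, S\!\left(\tfrac{t-i}{T}\right)$ and expand $S$ about $\tfrac{t}{T}$: since the argument shifts by only $i/T$ with $i \le K-1 \ll T$, we get $S\!\left(\tfrac{t-i}{T}\right) = S\!\left(\tfrac{t}{T}\right) - \tfrac{i}{T} S'\!\left(\tfrac{t}{T}\right) + O\!\left(\tfrac{i^2}{T^2}\right)$. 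Then form the weighted sum, use $\sum_i w_i = 1$ to pull out the leading term $\delta_{smooth}(t)$, and bound the correction by $\tfrac{1}{T}\|S'\|_\infty \sum_i i\, w_i \le \tfrac{K}{T}\|S'\|_\infty = O(K/T)$, which is negligible by hypothesis. Since $S'(\tau) = 30\tau^2(1-\tau)^2$ is bounded on $[0,1]$ (indeed by the $C^2$-boundary lemma its sup norm is a small constant), the approximation $\big\|\sum_i w_i \delta_{smooth}(t-i)\big\| = \|\delta_{smooth}(t)\| + O(\alpha\|\mathbf{d}\| K/T)$ follows, which is the claimed $\approx$.

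For the second half (variance suppression of i.i.d. noise), this is a routine computation: by linearity and independence, $\mathrm{Var}\!\left[\sum_{i=0}^{K-1} w_i \epsilon(t-i)\right] = \sum_{i=0}^{K-1} w_i^2\, \mathrm{Var}[\epsilon(t-i)] = \sigma^2 \sum_{i=0}^{K-1} w_i^2$ (componentwise, since $\epsilon \sim \mathcal{N}(0,\sigma^2 I)$). The bound $\sum_i w_i^2 \le 1/K$ subject to $\sum_i w_i = 1$ is the standard fact that the $\ell^2$ norm of a point on the probability simplex is minimized at the centroid; I would prove it either by Cauchy–Schwarz ($1 = (\sum_i w_i)^2 \le K \sum_i w_i^2$) or by Jensen/convexity of $x \mapsto x^2$, and note equality holds exactly when all $w_i = 1/K$.

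I would then assemble the two halves into the proposition statement, perhaps adding one sentence of interpretation: the signal-to-noise ratio of the drift relative to the noise is \emph{amplified} by ensembling by a factor of order $\sqrt{K}$, since the structured perturbation survives while the unstructured noise shrinks by $\Theta(1/\sqrt{K})$ in standard deviation. The main (and really only) obstacle is making the first half rigorous rather than heuristic: the phrase ``slow temporal variation relative to chunk size'' needs to be pinned down as the quantitative hypothesis $T \gg K$, and one must be careful that the error term is controlled uniformly in $t$ and in the choice of weights $\{w_i\}$ — the bound $\sum_i i\, w_i \le K-1$ holds for \emph{any} weights on the simplex, so uniformity is automatic, but this should be stated explicitly. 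A minor subtlety is whether $\delta_{smooth}$ is defined only on the attack window; outside it $S$ is clamped (the $\tilde S$ of Eq.~\ref{eq:poison_synthesis}), so strictly the Taylor argument applies on the interior of the window and the boundary steps contribute an $O(K/T)$-fraction of the total, which I would remark is absorbed into the same error order.
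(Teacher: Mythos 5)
Your overall route is essentially the paper's. For the first half, the paper argues that because the Smootherstep perturbation varies on timescale $T \gg K$ (bandwidth $\sim 1/T \ll 1/K$), the values $\delta_{smooth}(t-i)$ are nearly identical across the $K$-step window, and then pulls $\delta_{smooth}(t)$ out of the weighted sum using $\sum_i w_i = 1$; your first-order Taylor expansion with the explicit $O(\alpha\|\mathbf{d}\|K/T)$ error, the weight-uniform bound $\sum_i i\,w_i \le K-1$, and the remark about the clamped $\tilde S$ at the window boundary is just a more quantitative rendering of that same step, which is a mild improvement in rigor. The second half (linearity plus independence giving $\sigma^2\sum_i w_i^2$, then a simplex inequality with equality at $w_i=1/K$) is the identical computation the paper does.

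There is, however, one genuine defect you reproduced rather than caught: you write that ``the bound $\sum_i w_i^2 \le 1/K$'' follows from Cauchy--Schwarz, but the inequality you quote, $1=(\sum_i w_i)^2 \le K\sum_i w_i^2$, gives $\sum_i w_i^2 \ge 1/K$ --- the $\ell^2$ norm on the probability simplex is \emph{minimized}, not maximized, at the centroid (a point mass $w_0=1$ gives $\sum_i w_i^2 = 1$). So the displayed ``$\le \sigma^2/K$'' cannot be obtained from Cauchy--Schwarz or Jensen and is false for general weights. This is an error in the proposition as printed: the paper's own proof correctly derives $\sum_i w_i^2 \ge 1/K$ with equality at uniform weights and then evaluates the uniform case, so the statement's inequality should read ``$\ge$'' (or be restricted to uniform ensembling, where $\mathrm{Var}=\sigma^2/K$ exactly). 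A careful write-up should state the corrected direction explicitly; the substantive conclusion you draw --- under uniform ensembling the noise standard deviation shrinks by $\Theta(1/\sqrt{K})$ while the smooth drift passes through unattenuated, so the drift-to-noise ratio grows like $\sqrt{K}$ --- is exactly what both your argument and the paper's actually establish.
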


\begin{proof}
\textbf{Part 1: Smootherstep Preservation.}

Since the Smootherstep function varies slowly over timescale $T \gg K$, its bandwidth is approximately $1/T \ll 1/K$. For consecutive timesteps within a single ensemble window of size $K$, the perturbation values are nearly identical:
\begin{equation}
    \delta_{smooth}(t-i) \approx \delta_{smooth}(t) \quad \text{for } i \in \{0, 1, \ldots, K-1\}
\end{equation}

Therefore:
\begin{align}
    \sum_{i=0}^{K-1} w_i \delta_{smooth}(t-i) &\approx \sum_{i=0}^{K-1} w_i \delta_{smooth}(t) \nonumber \\
    &= \delta_{smooth}(t) \sum_{i=0}^{K-1} w_i \nonumber \\
    &= \delta_{smooth}(t)
\end{align}

The attack signal passes through temporal ensembling unattenuated.

\textbf{Part 2: Noise Attenuation.}

For independent noise samples $\epsilon(t-i)$:
\begin{align}
    \mathrm{Var}\left[\sum_{i=0}^{K-1} w_i \epsilon(t-i)\right] &= \sum_{i=0}^{K-1} w_i^2 \mathrm{Var}[\epsilon(t-i)] \nonumber \\
    &= \sigma^2 \sum_{i=0}^{K-1} w_i^2
\end{align}

By the Cauchy-Schwarz inequality, for $\sum_i w_i = 1$:
\begin{equation}
    \sum_{i=0}^{K-1} w_i^2 \geq \frac{1}{K}
\end{equation}
with equality when $w_i = 1/K$ (uniform weights).

For uniform ensembling, the variance of the averaged noise is:
\begin{equation}
    \mathrm{Var}\left[\frac{1}{K}\sum_{i=0}^{K-1} \epsilon(t-i)\right] = \frac{\sigma^2}{K}
\end{equation}

The standard deviation is reduced by factor $\sqrt{K}$, providing significant attenuation for large chunk sizes.
\end{proof}

\begin{table*}[t]
\centering
\caption{Comparison of evaluation protocols and failure modes between \ours~ and prior VLA backdoor attacks. Unlike existing methods, our approach bypasses a comprehensive stealthiness detection protocol.}
\label{tab:stealth_comparison}
\scalebox{0.8}{
\begin{tabular}{l c c c c}
\toprule
\textbf{Approach} & \textbf{Black-box Setting} & \textbf{Human Replay} & \textbf{Kinematic Sensor} & \textbf{Failure Mode} \\
\midrule
BadVLA ~\cite{badvla2025} & $\times$ & $\times$ & $\times$ & Untargeted Failure \\
TabVLA ~\cite{tabvla2025} & \checkmark & $\times$ & $\times$ & Premature Gripper Release \\
GoBA ~\cite{goba2025} & \checkmark & $\times$ & $\times$ & Targeted Redirection \\
\ours~ (Ours) & \checkmark & \checkmark & \checkmark & Smooth Drift / Near-Miss \\
\bottomrule
\end{tabular}}
\end{table*}

\textbf{Implication for Attack Design.}
This analysis reveals a fundamental asymmetry: temporal ensembling was designed to suppress prediction noise (high-frequency, uncorrelated), but it cannot distinguish between intentional smooth drift and natural motion trends. Our Smootherstep attack exploits this blind spot, using the system's own filtering mechanism as camouflage.

\subsection{Temporal Ensembling as Attack Shield}
A key insight of \ours~is that temporal ensembling, designed to improve trajectory smoothness, inadvertently protects our attack signal. Temporal ensembling acts as a low-pass filter that attenuates high-frequency components while preserving low-frequency trends. Random perturbations $\epsilon(t) \sim \mathcal{N}(0, \sigma^2 I)$ are attenuated by factor $\sqrt{K}$ through averaging. In contrast, our Smootherstep perturbations vary slowly relative to the chunk size (bandwidth $\ll 1/K$), meaning overlapping prediction windows contain nearly identical attack signals. Consequently, $\|\sum_{i} w_i \delta_{smooth}(t-i)\| \approx \|\delta_{smooth}(t)\|$---the attack passes through the ensemble filter essentially unattenuated, like a Trojan horse exploiting the system's own smoothing mechanism. Figure~\ref{fig:smootherstep} illustrates how Smootherstep modulation produces smooth spatial deviation (left) with $C^2$-continuous kinematic profiles (right), ensuring the perturbation remains undetectable by dynamics-based monitors.

\subsection{Potential Defense}
\label{sec:defense}

To counteract \ours, we explore potential defense strategies. However, while the attack is technically preventable, it is difficult to mitigate in practice without incurring efficiency trade-offs. The fundamental vulnerability exploited by \ours is the open-loop nature of action chunking, which creates a blind execution window where no visual feedback is available to correct accumulated drift.

\paragraph{Empirical Validation of a Naive Defense.} 
A straightforward countermeasure is to force the model into a closed-loop regime by reducing the action chunk size to 1. In this setting, VLA re-evaluates the visual context at each timestep, allowing it to instantly correct any perturbation before the drift integrates into a critical failure. As demonstrated in the ablation study (Sec.~\ref{sec:experiments}, Fig.~\ref{fig:ablation_action_chunk}), this strategy effectively reduces the Attack Success Rate (ASR) to 56.1\%. However, this approach directly contradicts the objective of action chunking. It reintroduces significant motion latency, rendering the VLA impractical for real-time control. 

\paragraph{Adaptive Chunk Truncation.} 
To balance the trade-off between security and efficiency, we propose a conceptual defense strategy rooted in \textit{Critical State Adaptive Verification}. Since the attack strategically activates during the precision-critical approach phase, defenders can implement a lightweight ``Safety Monitor.'' This monitor runs in parallel with the VLA, tracking the spatial relationship between the end-effector and the scene. When the system detects entry into a critical interaction zone, it triggers \textit{Adaptive Chunk Truncation}. It reduces the execution horizon or engages a secondary visual consistency check, breaking the continuous integration window precisely when the backdoor is active, without the computational overhead of continuous monitoring during safe transit phases.

\paragraph{Deployment Hurdles and Open Challenges.} 
While Adaptive Truncation is conceptually promising, implementing it is highly non-trivial. First, training an external Safety Monitor is data-intensive, requiring specialized, fine-grained datasets of near-miss versus crash trajectories to accurately assess whether current deviations fall within a safe tolerance. Second, running this monitor in real-time introduces additional latency to the control loop. Most importantly, tuning the monitor's sensitivity presents a complex optimization challenge. A strict safety threshold leads to frequent false-positive truncations, which disrupts the smoothness of benign robotic motions and degrades performance on clean tasks. Conversely, a loose threshold will fail to detect the subtle, smooth kinematic drifts induced by \ours.

\subsection{Comparative Analysis of Backdoor Stealthiness}
Prior VLA backdoor works mostly focus on task manipulation at the expense of kinematic stealthiness, resulting in easily detectable trajectory anomalies. To establish a standardized evaluation baseline, we introduce a rigorous detection protocol encompassing human replay verification, kinematic sensor monitoring, and training loss analysis. As shown in Table \ref{tab:stealth_comparison}, \ours~ is distinguished by its multidimensional stealth capabilities. Unlike prior paradigms that trigger premature releases or untargeted failures, \ours~ exclusively bypasses comprehensive human and sensor-level monitoring under black-box conditions, achieving malicious objectives through highly deceptive, smooth trajectory drifts.

\end{document}